\documentclass[journal,comsoc]{IEEEtran}
\usepackage[T1]{fontenc}

\usepackage{amssymb,amsmath,amsfonts,amsthm}
\usepackage{mathrsfs}
\usepackage{cite}
\usepackage{array}
\usepackage{tabularx}
\usepackage[table]{xcolor}
\usepackage{color}
\usepackage{mathtools}
\usepackage{subfigure}
\usepackage{textcomp}
\usepackage{gensymb}
\usepackage{mathtools}
\usepackage{graphicx}
\usepackage{bbm}
\usepackage[utf8]{inputenc}

\newtheorem{fact}{Fact}

\newtheorem{lemma}{Lemma}

\newtheorem{propi}{Proposition}

\IEEEoverridecommandlockouts
\ifCLASSINFOpdf
\else
\fi
\interdisplaylinepenalty=2500
\usepackage[cmintegrals]{newtxmath}

\begin{document}
	\title{When is the Achievable Rate Region Convex in Two-User Massive MIMO Systems?}
\author{Zheng~Chen, Emil~Bj\"{o}rnson, and~Erik~G.~Larsson
	\thanks{Z.~Chen, E.~Bj\"{o}rnson, and E.~G.~Larsson are with the Department of Electrical Engineering (ISY), Link\"{o}ping University,  58183 Link\"{o}ping, Sweden (email: \{zheng.chen, emil.bjornson, erik.g.larsson\}@liu.se).}%
	\thanks{This work was supported in part by ELLIIT, CENIIT, and the Swedish Foundation for Strategic Research (SSF).} \vspace{-6mm}}
\maketitle

\begin{abstract}
This letter investigates the achievable rate region in Massive multiple-input-multiple-output (MIMO) systems with two users, with focus on the i.i.d.~Rayleigh fading and line-of-sight (LoS) scenarios. If the rate region is convex, spatial multiplexing is preferable to orthogonal scheduling, while the opposite is true for non-convex regions. We prove that the uplink and downlink rate regions with i.i.d.~Rayleigh fading are convex, while the convexity in LoS depends on  parameters such as angular user separation, number of antennas, and signal-to-noise ratio (SNR).
\end{abstract}
\vspace{-0.5cm}
\begin{keywords}
Rate region, convexity, Massive MIMO.
\end{keywords}

\vspace{-0.3cm}
\section{Introduction}

The achievable rate region of a multi-user multiple-input multiple-output (MIMO) system can be non-convex when using suboptimal transmission schemes that treat interference as noise or utilize imperfect channel state information (CSI) \cite{caire-achievable-mimo, ergodic-mimo-broadcast,bjornson2013optimal}. Yet, the capacity region is always convex \cite{capacity-mimo-broadcast}, since the capacity-achieving scheme may implicitly use time-sharing (scheduling) to convexify a region by operating on a line between two achievable points. In practice, it is important to know if a scheme for broadcast or multiple access channels must be accompanied with scheduling to maximize the rates.

Massive MIMO refers to a multi-user MIMO system with  a very large number of service antennas and this approach is a key enabler for the next generation of wireless networks \cite{zhang-overview}. Despite the vast research on Massive MIMO, prior works have not explicitly targeted the convexity of the rate region.
The common practice in Massive MIMO is to serve all users by spatial multiplexing \cite{massivemimobook}, although that is only preferable when the rate region is convex. In contrast, when the rate region is non-convex, higher sum throughput can be achieved by scheduling some at a time. In the basic two-user case, scheduling can be implemented by TDMA/FDMA/OFDMA.

In this letter, we characterize the achievable rate region of two-user Massive MIMO systems to gain fundamental insights into the convexity properties. A general form of rate expressions is considered and the Pareto boundaries in uplink (UL) and downlink (DL) are derived. By defining the boundary curve by parametric equations of the power-control coefficients, we derive necessary and sufficient conditions for convexity. In two specific cases, namely i.i.d.~Rayleigh fading and line-of-sight (LoS) channels, the exact convexity conditions are derived, discussed, and illustrated numerically.

 Note that the analysis in this letter is not limited to Massive MIMO systems. It can also capture other general cases when a single-antenna transmitter intends to send different signals to two receivers, such as a broadcast channel.

\vspace{-0.2cm}
\section{Two-user Rate Region in Massive MIMO}
\label{sec:general_downlink}
Consider a Massive MIMO system where a base station (BS) equipped with $M$ antennas serves two users. The achievable rates can be written in the general form
\vspace{-0.1cm}
\begin{equation}
R_1=\log_2\left(1+\frac{\alpha_1\eta_1}{\mu_{1,1} \eta_1+\mu_{1,2} \eta_2+1}\right),\label{rate_1}
\end{equation}
\begin{equation}
R_2=\log_2\left(1+\frac{\alpha_2 \eta_2}{\mu_{2,1} \eta_1+\mu_{2,2} \eta_2+1}\right), \label{rate_2}
\end{equation}
where $\eta_k$ is the power-control coefficient and $\alpha_k\geq0$ is the effective channel gain of user $k$, for $k=1,2$. The noise power is normalized to 1, $\mu_{1,1}\geq0$ and $\mu_{2,2}\geq0$ are the self-interference coefficients caused by having imperfect CSI, and $\mu_{1,2}\geq0$ and $\mu_{2,1}\geq0$ are the inter-user interference coefficients. 
Note that $\eta_1$ and $\eta_2$ are design variables.
In the DL, the power constraint at the BS is $0\!\leq\! \eta_{1}+\eta_{2}\!\leq \!1$. In the UL, the power constraints of the two users are $0\leq\eta_1\leq 1$ and $0\leq\eta_2\leq1$. We keep the coefficients arbitrary in this section, while specific expressions are given in Sections~\ref{ergodic} and \ref{LoS}.

Using a similar definition of the Pareto boundary as in \cite{capacity-region}, 
in the following lemma, we state the general interpretation of convexity of the achievable rate region.
\begin{lemma}
	\label{lemma-convexity}
		Denote by $R_{1,\textnormal{bd}}$ and $R_{2,\textnormal{bd}}$ the achievable rates of user $1$ and user $2$ at the Pareto boundary. Both $R_{1,\textnormal{bd}}$ and $R_{2,\textnormal{bd}}$ are functions of the power-control coefficients $\eta_{1}$ and $\eta_{2}$. The achievable rate region is convex, if and only if $R_{1,\textnormal{bd}}$ is a concave function of $R_{2,\textnormal{bd}}$ for $R_{2,\textnormal{bd}}\in[0, R_{2,\max}]$, where $R_{2,\max}$ is the maximum rate of user $2$.
		If $R_{1,\textnormal{bd}}(R_{2,\textnormal{bd}})$  is continuous and twice differentiable for $R_{2,\textnormal{bd}}\in[0, R_{2,\max}]$, then $R_{1,\textnormal{bd}}(R_{2,\textnormal{bd}})$ is a concave function if and only if $\frac{d^2 R_{1,\textnormal{bd}}}{d R_{2,\textnormal{bd}}^2} \leq 0$.
\end{lemma}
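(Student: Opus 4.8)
The plan is to establish the two claims in the lemma separately. The first claim—that the region is convex if and only if $R_{1,\textnormal{bd}}$ is a concave function of $R_{2,\textnormal{bd}}$—is essentially a translation of the geometric definition of convexity of a region bounded by its Pareto boundary into a statement about the boundary curve. The second claim is the standard second-derivative characterization of concavity for twice-differentiable functions.

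For the first part, I would argue as follows. The achievable rate region is a subset of the nonnegative quadrant in the $(R_2,R_1)$ plane. Because the rate expressions in \eqref{rate_1}--\eqref{rate_2} are monotonically increasing in the useful power and the region is obtained by sweeping the power-control coefficients over the (convex) feasible power set, the region is comprehensive (downward-closed): if a rate pair is achievable, so is any componentwise-smaller pair, since one can always scale down powers or discard rate. Consequently the region is fully described by its Pareto boundary together with the axes, and it can be written as the hypograph-type set of all points lying below the boundary curve $R_1 = R_{1,\textnormal{bd}}(R_2)$ for $R_2\in[0,R_{2,\max}]$ and to the left of the vertical segment at $R_{2,\max}$. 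A standard fact is that such a comprehensive region is convex if and only if the function defining its upper-right boundary is concave. I would make this precise by taking two boundary points and noting that their chord lies inside the region exactly when the boundary curve lies above every such chord, which is the definition of concavity of $R_{1,\textnormal{bd}}(R_2)$ on $[0,R_{2,\max}]$.

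For the second part, I would invoke the elementary result from real analysis that a twice-differentiable function on an interval is concave if and only if its second derivative is nonpositive throughout that interval. This requires no new work beyond citing the characterization, once continuity and twice-differentiability on $[0,R_{2,\max}]$ are assumed as hypotheses.

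The main obstacle, and the step deserving the most care, is justifying that the region is genuinely comprehensive and that its boundary is a well-defined single-valued function $R_{1,\textnormal{bd}}(R_2)$. This hinges on the Pareto boundary being traced out by a monotone trade-off: as we move power toward user $2$, $R_{2,\textnormal{bd}}$ increases while $R_{1,\textnormal{bd}}$ decreases, so the boundary is the graph of a decreasing function rather than a self-intersecting curve. I would establish this monotonicity from the structure of \eqref{rate_1}--\eqref{rate_2} and the definition of the Pareto boundary (no achievable point dominates a boundary point), which guarantees the parametric curve can be reparametrized as a function of $R_{2,\textnormal{bd}}$ alone. Once single-valuedness and the comprehensive property are in hand, the equivalence with concavity is immediate, and the twice-differentiable refinement follows directly.
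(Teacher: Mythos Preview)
Your proposal is correct, and in fact it goes well beyond what the paper does: the paper states Lemma~\ref{lemma-convexity} without proof, treating both claims as essentially standard (the text introduces it only as ``the general interpretation of convexity of the achievable rate region,'' with a pointer to the Pareto-boundary definition in the cited reference). Your argument---that the region is comprehensive (downward-closed) and hence coincides with the hypograph of the boundary curve, so that convexity of the region is equivalent to concavity of $R_{1,\textnormal{bd}}(R_{2,\textnormal{bd}})$, together with the standard second-derivative test---is exactly the natural justification one would supply, and the care you take in arguing that the Pareto boundary is a single-valued decreasing function of $R_{2,\textnormal{bd}}$ is the one nontrivial point. There is nothing to correct; your write-up simply fills in what the paper leaves implicit.
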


\vspace{-0.4cm}
\subsection{Convexity of the DL Rate Region}
In the DL, the rates in \eqref{rate_1} and \eqref{rate_2} can be further simplified if we only consider the rates at the Pareto boundary.
\begin{fact}
	At the Pareto boundary $R_{1,\textnormal{bd}}(R_{2,\textnormal{bd}})$ of the DL rate region, we have $\eta_1+\eta_2=1$, $\eta_{1}\in[0,1]$ \cite[Theorem~1.9]{bjornson2013optimal}. 
\end{fact}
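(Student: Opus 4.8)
The plan is to establish this full-power property by a monotonicity-and-scaling argument that exploits the fact that the normalized noise term equals $1$ and therefore does not scale with the transmit powers. First I would fix the relevant definition: a feasible pair $(\eta_1,\eta_2)$ with $\eta_1+\eta_2\le 1$ lies on the Pareto boundary if no other feasible pair gives at least as large a rate to both users and a strictly larger rate to at least one. The goal is then to show that any such point must satisfy $\eta_1+\eta_2=1$; the range $\eta_1\in[0,1]$ follows immediately, since $\eta_1\ge 0$ together with $\eta_2=1-\eta_1\ge 0$ forces $\eta_1\le 1$.

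The key step is to show that each rate is nondecreasing under a common upscaling of the two power-control coefficients. Replacing $(\eta_1,\eta_2)$ by $(t\eta_1,t\eta_2)$ for $t\ge 1$ in \eqref{rate_1}--\eqref{rate_2}, the signal and interference terms both scale by $t$ while the additive noise stays at $1$. For user $1$ this turns the SINR into an expression of the form $\frac{A t}{B t+1}$ with $A=\alpha_1\eta_1\ge 0$ and $B=\mu_{1,1}\eta_1+\mu_{1,2}\eta_2\ge 0$; differentiating in $t$ gives $\frac{A}{(Bt+1)^2}\ge 0$, so the SINR, and hence $R_1$, is nondecreasing in $t$, and strictly so whenever $A>0$ (that is, whenever $\alpha_1,\eta_1>0$). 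The identical computation applies to $R_2$.

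With this monotonicity in hand I would argue by contradiction. Suppose a Pareto-optimal point has $s:=\eta_1+\eta_2<1$. If $s>0$, take $t=1/s>1$; the rescaled pair $(\eta_1/s,\eta_2/s)$ is feasible with sum $1$, and by the previous step both rates are nondecreasing with at least one strictly larger (the user whose $\eta_k$ and $\alpha_k$ are positive), contradicting Pareto optimality. If instead $s=0$, both rates vanish, and moving to $(1,0)$ or $(0,1)$ strictly increases one rate while keeping the other at $0$, again a contradiction. Hence every boundary point satisfies $\eta_1+\eta_2=1$.

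The main obstacle is the bookkeeping at the degenerate corners: when some $\eta_k=0$ or some $\alpha_k=0$, the corresponding rate is pinned at $0$ and the scaling argument yields only weak monotonicity for that user, so one cannot directly invoke the strict improvement demanded by the Pareto definition. These cases must be handled separately, but in each of them the other user's rate still increases strictly under upscaling (unless both channel gains vanish, which is the trivial case where the region collapses to the origin), so the conclusion $\eta_1+\eta_2=1$ is unaffected.
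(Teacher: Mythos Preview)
Your proposal is correct and follows essentially the same scaling argument that the paper relies on: the paper gives no self-contained proof for this Fact but simply cites \cite[Theorem~1.9]{bjornson2013optimal}, and the analogous UL Fact in the paper is proved by exactly the common-upscaling idea you use (scale both coefficients by $\delta>1$ until the constraint binds). Your write-up thus just fills in the details behind that citation, including the degenerate corner cases, which the paper omits.
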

\vspace{-0.1cm}
Based on this fact, the user rates at the Pareto boundary can be given as functions of a single variable $\eta_1$:
\vspace{-0.1cm}
\begin{align}
R_{1,\textnormal{bd}}(\eta_{1})&=\log_2\left(1+\frac{\alpha_1\eta_1}{(\mu_{1,1}-\mu_{1,2}) \eta_1+\mu_{1,2}+1}\right), \label{eq:rate1_general}\\
R_{2,\textnormal{bd}}(\eta_{1})&=\log_2\left(1+\frac{\alpha_2 (1-\eta_1)}{(\mu_{2,1} -\mu_{2,2} )\eta_1+\mu_{2,2}+1}\right). \label{eq:rate2_general}
\end{align}
Both $R_{1,\textnormal{bd}}$ and $R_{2,\textnormal{bd}}$ are continuous and twice-differentiable functions of $\eta_1$. The second-order derivative $\frac{d^2 R_{1,\textnormal{bd}}}{d R_{2,\textnormal{bd}}^2}$ is obtained by utilizing following fact.
\begin{fact}
	\label{def:chain_rule}
	Consider a two-dimensional curve with coordinates $(x,y)$ defined parametrically as $x=g(u)$ and $y=f(u)$, where $g$ and $f$ are continuous and twice-differentiable functions of $u$. The first-order derivative of $y$ with respect to $x$ is
		\vspace{-0.1cm}
	\begin{equation}
	\frac{d y}{d x}\mathop{=}\limits^{(a)}\frac{d y}{d u}\frac{d u}{d x}\mathop{=}\limits^{(b)}\frac{\frac{d y}{d u}}{\frac{d x}{d u}}=\frac{ f'(u)}{g'(u)},
	\label{eq:chain-rule-one}
		\vspace{-0.1cm}
	\end{equation}
	where $(a)$ is the chain rule in Leibniz's notation and $(b)$ follows from the inverse function rule.
	The second-order derivative of $y$ with respect to $x$ is
		\vspace{-0.2cm}
	\begin{align}
	\frac{d^2 y}{d x^2}&=\frac{d}{d u}\left(\frac{d y}{d x}\right)\cdot \frac{d u}{d x} = \frac{d \left(\frac{ f'(u)}{g'(u)}\right)}{d u}\cdot \frac{1}{g'(u)}   \nonumber \\
	&=\frac{f''(u)g'(u)-f'(u)g''(u)}{\left(g'(u)\right)^3}.
	\label{eq:chain-rule-second}
	\vspace{-0.3cm}
	\end{align}
	Note that $\frac{d^2 y}{d x^2}$ is obtained as a function of $u$. 
\end{fact}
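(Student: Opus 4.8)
The plan is to establish the two displayed identities by a short sequence of elementary differentiation rules, treating $u$ as an intermediate parameter and $x$ as the independent variable. The only structural hypothesis I would need beyond the stated twice-differentiability of $g$ and $f$ is the nondegeneracy condition $g'(u)\neq 0$ on the interval of interest, which guarantees that $x=g(u)$ is locally invertible and therefore that $u$, and hence $y$, may legitimately be regarded as a function of $x$.

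First I would derive the first-order identity \eqref{eq:chain-rule-one}. Viewing $y$ as a function of $x$ through the composite map $x\mapsto u\mapsto y$, the chain rule in Leibniz's notation gives $\frac{dy}{dx}=\frac{dy}{du}\frac{du}{dx}$, which is step $(a)$. Applying the inverse function rule to $x=g(u)$ yields $\frac{du}{dx}=\bigl(\frac{dx}{du}\bigr)^{-1}=1/g'(u)$, which is step $(b)$; substituting $\frac{dy}{du}=f'(u)$ then produces $\frac{dy}{dx}=f'(u)/g'(u)$.

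Next I would treat the second-order identity \eqref{eq:chain-rule-second}. The key observation is that $\frac{dy}{dx}$ has now been written as a function of $u$, so differentiating it with respect to $x$ is again a chain-rule computation, $\frac{d^2y}{dx^2}=\frac{d}{du}\bigl(\frac{dy}{dx}\bigr)\cdot\frac{du}{dx}$. I would evaluate the inner derivative $\frac{d}{du}\bigl(f'(u)/g'(u)\bigr)$ by the quotient rule, obtaining $\bigl(f''(u)g'(u)-f'(u)g''(u)\bigr)/(g'(u))^2$, and then multiply by $\frac{du}{dx}=1/g'(u)$ from step $(b)$. Collecting the two denominator factors gives the cubed term $(g'(u))^3$ and completes the formula; as a sanity check, this agrees with the standard parametric second-derivative formula found in elementary calculus.

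Since the result is essentially a bookkeeping exercise in the chain and quotient rules, there is no deep obstacle. The one point that requires care, and the thing I would flag as the subtle step, is the implicit nondegeneracy condition $g'(u)\neq 0$, which is needed both to invoke the inverse function rule and to keep every denominator well-defined. In the paper's intended application one has $g=R_{2,\textnormal{bd}}$ and $f=R_{1,\textnormal{bd}}$ as functions of $\eta_1$, so before applying this fact I would want to verify that $\frac{dR_{2,\textnormal{bd}}}{d\eta_1}$ does not vanish on the relevant range of $\eta_1$.
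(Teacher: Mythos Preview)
Your proposal is correct and follows exactly the approach taken in the paper: the paper's justification is already embedded inline in the statement via the labeled steps $(a)$ (chain rule) and $(b)$ (inverse function rule), followed by a second application of the chain rule together with the quotient rule to obtain \eqref{eq:chain-rule-second}. Your explicit remark that $g'(u)\neq 0$ is required for the inverse function rule and for the denominators is a welcome clarification that the paper leaves implicit.
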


From \eqref{eq:rate1_general} and \eqref{eq:rate2_general}, using Lemma~\ref{lemma-convexity}, it is straightforward to obtain an exact convexity condition, but the expression is large and depends on six coefficients, thus it provides little insights.\footnote{The interested reader can find the exact condition in the supplementary document provided along with this paper.} However, for specific values of $\alpha_1, \alpha_2$ and $\mu_{1,1}, \mu_{1,2}, \mu_{2,1}, \mu_{2,2}$, that expression can be used to validate if $\frac{d^2 R_{1,\textnormal{bd}}}{d R_{2,\textnormal{bd}}^2} \leq 0$.

\begin{figure}[ht!]
	\vspace{-0.3cm}
	\centering
	\includegraphics[trim={1cm 2.5cm  1cm 3.2cm},clip, width=1.0\columnwidth]{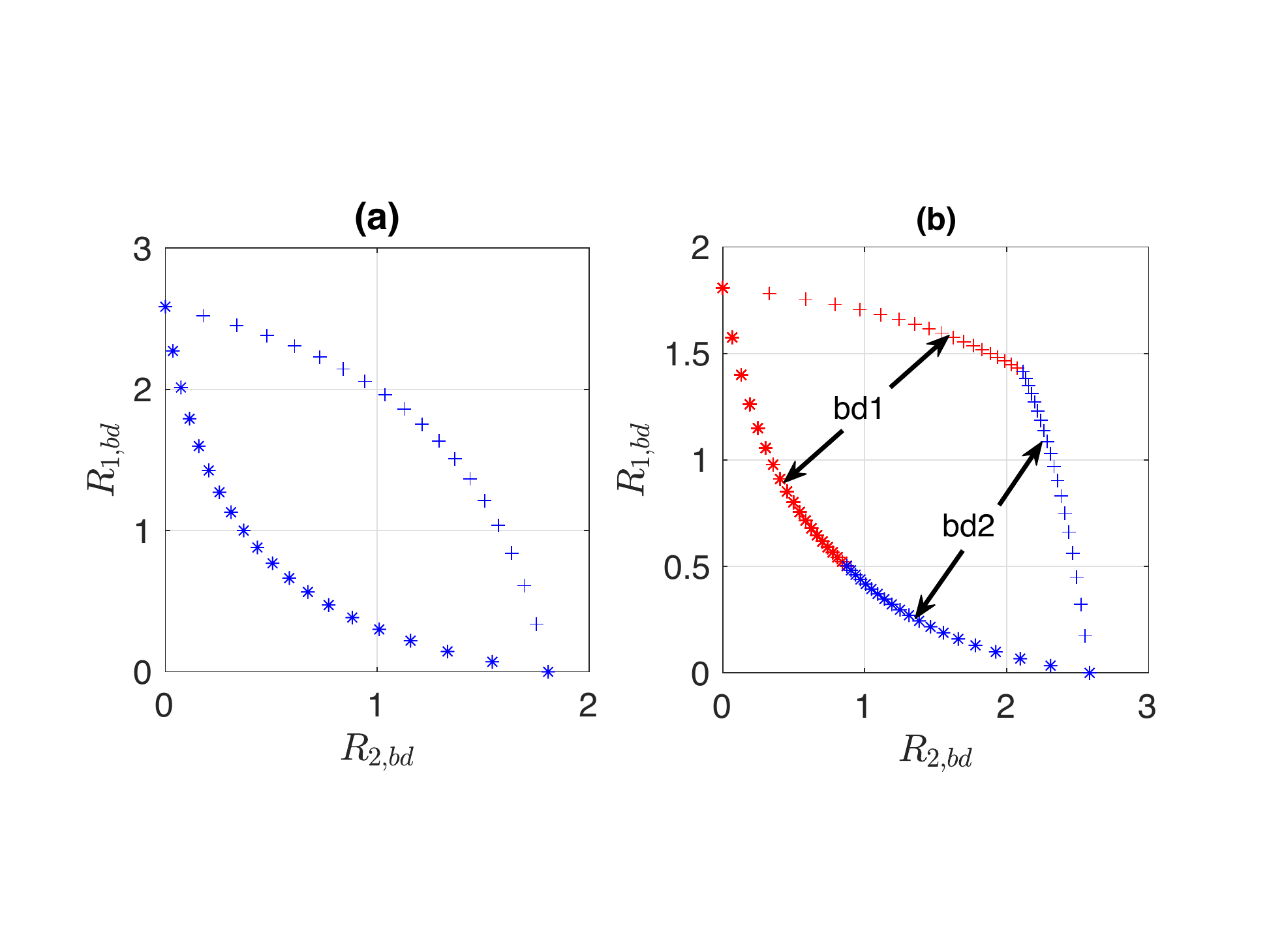}
	\vspace{-0.8cm}
	\caption{DL and UL Pareto boundaries with $\alpha_1=5$, $\alpha_2=10$, $\mu_{1,1}=\mu_{2,2}=1$. (a): The DL rate region is convex (marked with crosses) with $\mu_{1,2}=\mu_{2,1}=1$, and non-convex (marked with stars) with $\mu_{1,2}\!=\!\mu_{2,1}\!=\!10$; (b) The UL rate region is convex with $\mu_{1,2}\!=\!\mu_{2,1}\!=\!1$, and non-convex with $\mu_{1,2}\!=\!\mu_{2,1}\!=\!10$. }
	\label{fig:rate_downlink}
\end{figure}

Fig.~\ref{fig:rate_downlink}(a) shows examples of non-convex and convex DL rate regions. We see that when the inter-user interference coefficients are much larger than the self-interference coefficients, the achievable rate region has a non-convex shape.

\vspace{-0.3cm}
\subsection{Convexity of the UL Rate Region}

We now shift focus to the Pareto boundary in the UL. 
\begin{fact}
	\label{fact:2}
	The Pareto boundary $R_{1,\textnormal{bd}}(R_{2,\textnormal{bd}})$ of the UL rate region consists of two segments: one with $\eta_1=1$ and any $\eta_2\in[0,1]$, the other one with $\eta_2=1$ and any $\eta_1\in[0,1]$. 
\end{fact}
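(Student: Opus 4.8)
The plan is to read off the shape of the Pareto boundary directly from the monotonicity of \eqref{rate_1}--\eqref{rate_2} together with the sign of the Jacobian of the map $(\eta_1,\eta_2)\mapsto(R_1,R_2)$ over the UL feasible box $[0,1]^2$. First I would differentiate the two rates. Writing $D_1=\mu_{1,1}\eta_1+\mu_{1,2}\eta_2+1$ and $D_2=\mu_{2,1}\eta_1+\mu_{2,2}\eta_2+1$ for the SINR denominators and $\gamma_k$ for the SINR of user $k$, a direct computation gives $\frac{\partial R_1}{\partial\eta_1}=\frac{1}{(\ln 2)(1+\gamma_1)}\frac{\alpha_1(\mu_{1,2}\eta_2+1)}{D_1^2}>0$ and $\frac{\partial R_1}{\partial\eta_2}=\frac{1}{(\ln 2)(1+\gamma_1)}\frac{-\alpha_1\mu_{1,2}\eta_1}{D_1^2}\le0$, with the symmetric statements for $R_2$. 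Hence each user's rate strictly increases in its own power and is non-increasing in the other user's power, which is the conflict that shapes the boundary.

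The key step, and the one I expect to be the main obstacle, is to determine the sign of the Jacobian determinant $J=\frac{\partial R_1}{\partial\eta_1}\frac{\partial R_2}{\partial\eta_2}-\frac{\partial R_1}{\partial\eta_2}\frac{\partial R_2}{\partial\eta_1}$. A priori its sign is unclear, because it is a difference of two nonnegative products (own-power terms minus cross-power terms). The crucial observation is that after factoring out the positive common prefactor $\frac{\alpha_1\alpha_2}{(\ln 2)^2(1+\gamma_1)(1+\gamma_2)D_1^2D_2^2}$, the bracketed term collapses via the cancellation $(\mu_{1,2}\eta_2+1)(\mu_{2,1}\eta_1+1)-\mu_{1,2}\mu_{2,1}\eta_1\eta_2=1+\mu_{1,2}\eta_2+\mu_{2,1}\eta_1$, where the mixed product cancels exactly. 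Thus $J>0$ throughout $[0,1]^2$ whenever $\alpha_1,\alpha_2>0$, regardless of the interference coefficients.

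With $J>0$ established, I would argue that no point with $\eta_1<1$ and $\eta_2<1$ can lie on the Pareto boundary. At such a point the direction $(d\eta_1,d\eta_2)$ with $d\eta_1,d\eta_2>0$ is feasible (both powers can still be raised). Writing $dR_1=a\,d\eta_1-b\,d\eta_2$ and $dR_2=-c\,d\eta_1+d\,d\eta_2$ with $a,d>0$ and $b,c\ge0$, the requirements $dR_1>0$ and $dR_2>0$ reduce to $\frac{b}{a}<\frac{d\eta_1}{d\eta_2}<\frac{d}{c}$, which admits a positive solution precisely when $ad>bc$, i.e.\ exactly when $J>0$. Hence both rates can be strictly increased at once, so the point is dominated. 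Consequently every Pareto-optimal point must satisfy $\eta_1=1$ or $\eta_2=1$.

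Finally I would confirm the converse inclusion so that the two edges are exactly the boundary. On the segment $\eta_1=1$, the monotonicity above shows that as $\eta_2$ grows from $0$ to $1$, $R_2$ strictly increases while $R_1$ strictly decreases, tracing a strictly decreasing curve from $(R_{1}(1,0),0)$ to the full-power point $(R_{1}(1,1),R_{2}(1,1))$; symmetrically the edge $\eta_2=1$ yields a strictly decreasing curve from $(0,R_{2}(0,1))$ to the same full-power point. No point on either curve dominates another, so combined with the containment result these two segments constitute precisely the Pareto boundary. The remaining edges $\eta_1=0$ and $\eta_2=0$ force $R_1=0$ and $R_2=0$ respectively and are dominated except at the corner points already contained in the two segments, which completes the proof of the claim.
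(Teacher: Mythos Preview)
Your argument is correct, but it takes a considerably longer route than the paper's. The paper's proof is a one-liner: for any interior point $(\eta_1,\eta_2)\in(0,1)^2$, scale both coordinates by the same factor $\delta>1$ until one of them reaches $1$. Since each SINR has the form $\dfrac{\alpha_k\eta_k}{\mu_{k,1}\eta_1+\mu_{k,2}\eta_2+1}$, the common scaling $\eta_k\mapsto\delta\eta_k$ multiplies the numerator and the interference part of the denominator by $\delta$ but leaves the noise term $+1$ unchanged, so both SINRs strictly increase. Hence every interior point is dominated by a point on one of the two edges $\eta_1=1$ or $\eta_2=1$, and the paper cites \cite[Theorem~1.9]{bjornson2013optimal} for this standard observation.

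Your Jacobian approach recovers exactly this conclusion but in a more general framework: you show that $J>0$ forces the existence of \emph{some} positive direction $(d\eta_1,d\eta_2)$ along which both rates increase, whereas the paper simply exhibits the radial direction $(d\eta_1,d\eta_2)\propto(\eta_1,\eta_2)$ directly. The advantage of your method is that it would still apply to rate expressions for which radial scaling is not obviously monotone; the advantage of the paper's argument is that it avoids the Jacobian computation altogether. You also go further than the paper by verifying the converse inclusion (that the two edges are traced out monotonically and are therefore undominated), which the paper's proof sketch omits; note, however, that your ``strictly decreasing'' claim on each edge uses $\mu_{1,2},\mu_{2,1}>0$, so in the degenerate no-interference case the edge collapses to a single Pareto point, a subtlety neither proof dwells on.
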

\begin{proof}
	For any $\eta_{1},\eta_{2}\in(0,1)$, by scaling both coefficients with $\delta>1$ such that either $\delta\eta_{1}=1$ or $\delta \eta_{2}=1$ is reached, a point at the Pareto boundary is achieved \cite[Theorem~1.9]{bjornson2013optimal}.
\end{proof}
At the first segment of the Pareto boundary with $\eta_1=1$, the rates can be simplified as functions of $\eta_2$:
\vspace{-0.15cm}
\begin{equation}
R_{1,\textnormal{bd1}}(\eta_{2})=\log_2\left(1+\frac{\alpha_1}{\mu_{1,2} \eta_2+\mu_{1,1} +1}\right),
\end{equation}
\vspace{-0.3cm}
\begin{equation}
R_{2,\textnormal{bd1}}(\eta_{2})=\log_2\left(1+\frac{\alpha_2 \eta_2}{\mu_{2,2} \eta_2+\mu_{2,1} +1}\right).
\end{equation}
Similarly, the second segment of the Pareto boundary is with $\eta_2=1$ and we can write the rates as functions of $\eta_1$:
\vspace{-0.15cm}
\begin{eqnarray}
R_{1,\textnormal{bd2}}(\eta_{1})&=&\log_2\left(1+\frac{\alpha_1\eta_1}{\mu_{1,1} \eta_1+\mu_{1,2} +1}\right),\\
R_{2,\textnormal{bd2}}(\eta_{1})&=&\log_2\left(1+\frac{\alpha_2 }{\mu_{2,1} \eta_1+\mu_{2,2} +1}\right).
\end{eqnarray}

Similar to the DL,  $R_{1,\textnormal{bd1}}(R_{2,\textnormal{bd1}})$ and $R_{1,\textnormal{bd2}}(R_{2,\textnormal{bd2}})$ need to be concave functions if the rate region is convex.
An additional condition at the interconnecting point of two segments of the boundary curve must also be satisfied.

\begin{lemma}
The two-user UL achievable rate region is convex if and only if (1) $\frac{d^2 R_{1,\textnormal{bd1}}}{d R_{2,\textnormal{bd1}}^2} \leq 0$; and (2) $\frac{d^2 R_{1,\textnormal{bd2}}}{d R_{2,\textnormal{bd2}}^2} \leq 0$; and (3) $\frac{R'_{1,\textnormal{bd1}}(\eta_{2})}{R'_{2,\textnormal{bd1}}(\eta_{2})}\Bigm|_{\eta_{2}=1}\geq \frac{R'_{1,\textnormal{bd2}}(\eta_{1})}{R'_{2,\textnormal{bd2}}(\eta_{1})}\Bigm|_{\eta_{1}=1}$.
\label{lemma-ul}
\end{lemma}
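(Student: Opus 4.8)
The plan is to invoke Lemma~\ref{lemma-convexity}, which reduces convexity of the region to concavity of the boundary curve $R_{1,\textnormal{bd}}$ regarded as a function of $R_{2,\textnormal{bd}}$ over $[0,R_{2,\max}]$, and then to exploit the two-segment structure established in Fact~\ref{fact:2}. First I would check that the two segments fit together into a single continuous, single-valued curve. Substituting $\eta_2=1$ into the segment-1 rates and $\eta_1=1$ into the segment-2 rates yields the same point (both users at full power), so the two arcs share a common junction and the curve is continuous there. On segment~1, raising $\eta_2$ from $0$ to $1$ strictly increases $R_{2,\textnormal{bd1}}$ and strictly decreases $R_{1,\textnormal{bd1}}$; on segment~2, lowering $\eta_1$ from $1$ to $0$ strictly increases $R_{2,\textnormal{bd2}}$ and strictly decreases $R_{1,\textnormal{bd2}}$. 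Hence, traversing segment~1 and then segment~2 sweeps $R_{2,\textnormal{bd}}$ monotonically over $[0,R_{2,\max}]$ while $R_{1,\textnormal{bd}}$ decreases monotonically, so $R_{1,\textnormal{bd}}$ is a well-defined function of $R_{2,\textnormal{bd}}$ and Lemma~\ref{lemma-convexity} applies.

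Next I would appeal to the standard characterization of concavity for a continuous, piecewise-$C^2$ function of one variable: such a function is concave on the whole interval if and only if (i) it is concave on each smooth piece, and (ii) at the junction its left-hand derivative is at least its right-hand derivative, i.e. the slope does not jump upward. Conditions~(1) and~(2) are exactly the per-segment concavity statements, with the second derivative $\frac{d^2 R_{1,\textnormal{bd}}}{d R_{2,\textnormal{bd}}^2}$ computed from the parametric representation via Fact~\ref{def:chain_rule}. Condition~(3) is exactly the junction condition: by the first-order part of Fact~\ref{def:chain_rule}, the one-sided slope of $R_{1,\textnormal{bd}}$ in $R_{2,\textnormal{bd}}$ at the end of segment~1 equals $\frac{R'_{1,\textnormal{bd1}}(\eta_2)}{R'_{2,\textnormal{bd1}}(\eta_2)}\big|_{\eta_2=1}$ and at the start of segment~2 equals $\frac{R'_{1,\textnormal{bd2}}(\eta_1)}{R'_{2,\textnormal{bd2}}(\eta_1)}\big|_{\eta_1=1}$, and requiring the former to be no smaller than the latter is precisely the non-increasing-slope requirement across the junction.

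With these ingredients assembled, both directions follow from the characterization: if (1)--(3) hold, then each arc is concave and the slope does not increase across the kink, so the concatenated curve is concave and, by Lemma~\ref{lemma-convexity}, the region is convex; conversely, if the region is convex then the curve is concave, which forces concavity on each arc together with the non-increasing-slope condition at the junction. I expect the main obstacle to be the careful bookkeeping at the junction: verifying the signs of $R'_{1,\textnormal{bd}}$ and $R'_{2,\textnormal{bd}}$ on each segment so that both one-sided slopes are negative and correctly oriented, confirming that traversing segment~1 then segment~2 indeed corresponds to increasing $R_{2,\textnormal{bd}}$ (noting that the chain-rule ratio in Fact~\ref{def:chain_rule} is invariant under reversing the parametrization of segment~2), and thereby pinning down the correct direction of the inequality in condition~(3). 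Once the orientation is fixed, conditions~(1)--(3) are simply the three components of piecewise concavity.
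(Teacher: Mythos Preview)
Your proposal is correct and follows essentially the same approach as the paper: invoke Lemma~\ref{lemma-convexity}, use the two-segment structure from Fact~\ref{fact:2}, obtain conditions~(1) and~(2) as per-segment concavity, and obtain condition~(3) as the non-increasing-slope requirement at the interconnecting point $\eta_1=\eta_2=1$ via the chain-rule formula in Fact~\ref{def:chain_rule}. Your write-up is in fact more careful than the paper's brief sketch, adding the continuity, monotonicity, and orientation checks that justify treating the boundary as a single-valued piecewise-$C^2$ function of $R_{2,\textnormal{bd}}$.
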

\begin{proof}
	The first two conditions follow directly from Lemma~\ref{lemma-convexity}. Since the Pareto boundary consists of two segments, at the interconnecting point with $\eta_{1}=\eta_{2}=1$, the boundary curve is not twice-differentiable. Thus, $\frac{d R_{1,\text{bd1}}}{d R_{2,\text{bd1}}}\Bigm|_{\eta_{2}=1}\geq\frac{d R_{1,\text{bd2}}}{d R_{2,\text{bd2}}}\Bigm|_{\eta_{1}=1}$ must be satisfied to assure the convexity of the region. From \eqref{eq:chain-rule-one}, we obtain the third condition in Lemma~\ref{lemma-ul}.
\end{proof}
\vspace{-0.1cm}
In Fig.~\ref{fig:rate_downlink}(b), we show examples of non-convex and convex UL rate regions.
We see that the UL Pareto boundary consists of two segments: $\small{\textsf{bd1}}$ with $\eta_1=1$ and $\small{\textsf{bd2}}$ with $\eta_2=1$. The combination of these segments completes the Pareto boundary.

\vspace{-0.2cm}
\section{Massive MIMO with i.i.d.~Rayleigh fading}
\label{ergodic}
As shown above, the convexity of the achievable rate region is strongly affected by the relations between the coefficients in the rate expressions. In this section, we analyze the rate region convexity for a Massive MIMO system with ergodic  i.i.d.~Rayleigh fading channels and single-antenna users.
\vspace{-0.2cm}
\subsection{Convexity of the DL Rate Region}
Denote by $\rho_{\textnormal{dl}}$ the total DL transmit power.
When using maximum ratio (MR) precoding based on imperfect channel knowledge as in \cite[Ch.~3]{marzetta2016fundamentals}, we have that a DL ergodic rate of user $k$ is  $R_k=\log_2\bigg(1+\frac{M\rho_{\textnormal{dl}} \gamma_k \eta_k}{1+\beta_k\rho_{\textnormal{dl}}(\eta_{1}+\eta_{2})}\bigg)$,
where $\eta_k$ is the power-control coefficient of user $k=1,2$ with $\eta_{1}+\eta_{2}\leq1$, $\beta_k$ is the channel gain of user $k$, and $\gamma_k$ is the mean square of the channel estimate.\footnote{With zero-forcing (ZF) precoding, the rate expressions are different but the same approach can be taken and the conclusions are the same.}

In terms of the general form in \eqref{rate_1} and \eqref{rate_2}, we have $\alpha_1=M\rho_{\textnormal{dl}} \gamma_1$, $\alpha_2=M\rho_{\textnormal{dl}} \gamma_2$, $\mu_{1,1}=\mu_{1,2}=\beta_1\rho_{\textnormal{dl}}$ and $\mu_{2,1}=\mu_{2,2}=\beta_2\rho_{\textnormal{dl}}$.
At the Pareto boundary, since $\eta_1+\eta_2=1$, we rewrite the ergodic achievable rates $R_{1,\textnormal{bd}}$ and $R_{2,\textnormal{bd}}$ as functions of $\eta_1$:
\vspace{-0.15cm}
\begin{equation}
R_{1,\textnormal{bd}}\!=\!\log_2\!\left(1+\frac{\alpha_1 \eta_1}{1+\mu_{1,1}}\right)\!, \,\, 
R_{2,\textnormal{bd}}\!=\!\log_2\!\left(1+\frac{\alpha_2 (1-\eta_1)}{1+\mu_{2,2}}\right).
\end{equation}
Defining $\zeta = \frac{1}{\ln(2)}$, the first- and second-order derivatives are 
\vspace{-0.1cm}
\begin{equation}
R'_{1,\textnormal{bd}}(\eta_1)=\frac{\zeta \alpha_1 }{1+\mu_{1,1}+\alpha_1 \eta_1}>0, 
\vspace{-0.25cm}
\end{equation}
\begin{align}
R'_{2,\textnormal{bd}}(\eta_1)&=-\frac{\zeta \alpha_2}{1+\mu_{2,2}+\alpha_2-\alpha_2\eta_1}<0,\\
\vspace{-0.2cm}
R''_{1,\textnormal{bd}}(\eta_1)&=-\frac{\zeta \alpha_1^2}{(1+\mu_{1,1}+\alpha_1 \eta_1)^2}<0,  \\
\vspace{-0.2cm}
R''_{2,\textnormal{bd}}(\eta_1)&=-\frac{\zeta \alpha_2^2}{(1+\mu_{2,2}+\alpha_2-\alpha_2\eta_1)^2}<0.
\vspace{-0.2cm}
\end{align}
Then, using Lemma \ref{def:chain_rule}, it follows that
\vspace{-0.1cm}
\begin{equation}
\frac{d^2 R_{1,\textnormal{bd}} }{d R_{2,\textnormal{bd}}^2}=\frac{R''_{1,\textnormal{bd}}(\eta_1)R'_{2,\textnormal{bd}}(\eta_1)-R''_{2,\textnormal{bd}}(\eta_1)R'_{1,\textnormal{bd}}(\eta_1)}{\big(R'_{2,\textnormal{bd}}(\eta_1)\big)^3}<0.
\vspace{-0.15cm}
\end{equation}
This means that $R_{1,\textnormal{bd}}$ is a strictly concave function of $R_{2,\textnormal{bd}}$. From Lemma~\ref{lemma-convexity}, the achievable rate region is always convex.

\vspace{-0.2cm}
\subsection{Convexity of the UL Rate Region}
When using MR combining, an UL ergodic rate of user $k$ is $R_k=\log_2\bigg(1+\frac{M \rho_{\textnormal{ul}} \gamma_k \eta_k}{1+\sum_{k=1}^{2} \beta_k\rho_{\textnormal{ul}}\eta_{k}}\bigg)$ \cite[Ch.~3]{marzetta2016fundamentals}.
In terms of the general form in \eqref{rate_1} and \eqref{rate_2}, we have $\alpha_1=M\rho_{\textnormal{ul}}\gamma_1$, $\alpha_2=M\rho_{\textnormal{ul}}\gamma_2$, $\mu_{1,1}=\mu_{2,1}=\beta_1\rho_{\textnormal{ul}}$ and $\mu_{1,2}=\mu_{2,2}=\beta_2\rho_{\textnormal{ul}}$.
For the  Pareto boundary segment with $\eta_1=1$, we obtain
\begin{eqnarray}
\vspace{-0.2cm}
R_{1,\textnormal{bd1}}&=&\log_2\left(1+\frac{\alpha_1 }{\mu_{1,1}+\mu_{1,2}\eta_{2}+1 }\right), \\
R_{2,\textnormal{bd1}}&=&\log_2\left(1+\frac{\alpha_2\eta_{2} }{\mu_{1,1} +\mu_{1,2}\eta_{2} +1}\right).
\vspace{-0.2cm}
\end{eqnarray}
The first- and second-order derivatives of $R_{1,\textnormal{bd1}}$ and $R_{2,\textnormal{bd1}}$ are
\begin{equation}
\vspace{-0.15cm}
R'_{1,\textnormal{bd1}}(\eta_2)=-\frac{\zeta\alpha_1\mu_{1,2}}{\left(1+\mu_{1,1}+\mu_{1,2}\eta_{2}\right)\left(1+\mu_{1,1}+\alpha_1+\mu_{1,2} \eta_1\right)}, 
\end{equation}
\vspace{-0.3cm}
\begin{equation}
R'_{2,\textnormal{bd1}}(\eta_2)=\frac{\zeta\alpha_2 (\mu_{1,1}+1)}{\left(1+\mu_{1,1} +\mu_{1,2}\eta_{2} \right)\left(1+\mu_{1,1}+(\alpha_2+\mu_{1,2})\eta_2\right)},
\end{equation}
\vspace{-0.3cm}
\begin{equation}
R''_{1,\textnormal{bd1}}(\eta_2)=\frac{\zeta \alpha_1 \mu_{1,2}^2\left[\alpha_1+2(1+\mu_{1,1}+\mu_{1,2}\eta_{2})\right]}{\left(1+\mu_{1,1}+\mu_{1,2}\eta_{2}\right)^2\left(1+\mu_{1,1}+\alpha_1+\mu_{1,2} \eta_2\right)^2},
\end{equation}
\vspace{-0.3cm}
\begin{equation}
\begin{split}
&R''_{2,\textnormal{bd1}}(\eta_2)\!=\\&\!-\frac{\zeta\alpha_2 (\mu_{1,1}\!+\!1)\!\left[(1+\mu_{1,1})(2\mu_{1,2}+\alpha_2)+2\mu_{1,2}\eta_{2}(\mu_{1,2}+\alpha_2)\right]}{\left(1+\mu_{1,1} +\mu_{1,2}\eta_{2} \right)^2\left(1+\mu_{1,1}+(\alpha_2+\mu_{1,2})\eta_2 \right)^2}.  
\end{split}
\end{equation}
\vspace{-0.1cm}
Using \eqref{eq:chain-rule-second} in Lemma \ref{def:chain_rule}, we can show that
\begin{equation}
\begin{split} \label{eq:UL_derivation_iid}
\frac{d^2 R_{1,\textnormal{bd1}}}{d R_{2,\textnormal{bd1}}^2}=&\frac{R''_{1,\textnormal{bd1}}(\eta_2)R'_{2,\textnormal{bd1}}(\eta_2)-R''_{2,\textnormal{bd1}}(\eta_2)R'_{1,\textnormal{bd1}}(\eta_2)}{\big(R'_{2,\textnormal{bd1}}(\eta_2)\big)^3}\\
=&-\frac{\alpha_1\mu_{1,2}\left[\alpha_2(1+\alpha_1+\mu_{1,1})+\alpha_1\mu_{1,2}\right]}{\zeta\alpha_2^2(1+\mu_{1,1})^2(1+\alpha_1+\mu_{1,1}+\mu_{1,2}\eta_{2})^2}\\
&\cdot (1\!+\!\mu_{1,1}\!+\!\mu_{1,2}\eta_{2})(1\!+\!\mu_{1,1}\!+\!\alpha_2\eta_{2}\!+\!\mu_{1,2}\eta_{2})\\
<&0,
\end{split}
\vspace{-0.5cm}
\end{equation}
where the intermediate steps are omitted due to lack of space.\footnote{The Mathematica scripts used by the authors to generate the results are included as supplementary documents to this paper.} 
From \eqref{eq:UL_derivation_iid} it follows that
the function $R_{1,\textnormal{bd1}}(R_{2,\textnormal{bd1}})$ is always concave. Analogously, we can prove that at the second segment of the Pareto boundary with $\eta_2=1$, 
 $R_{1,\textnormal{bd2}}(R_{2,\textnormal{bd2}})$ is also concave. 
At the interconnecting point of the two segments, 
in order to have $\frac{R'_{1,\textnormal{bd1}}(\eta_{2})}{R'_{2,\textnormal{bd1}}(\eta_{2})}\Bigm|_{\eta_{2}=1}\geq \frac{R'_{1,\textnormal{bd2}}(\eta_{1})}{R'_{2,\textnormal{bd2}}(\eta_{1})}\Bigm|_{\eta_{1}=1}$, after lengthy derivations, we obtain $
\frac{\mu_{1,2}}{1+\mu_{1,1}}\leq \frac{1+\mu_{1,2}}{\mu_{1,1}}$,
which always holds. Thus, all the convexity conditions in Lemma~\ref{lemma-ul} are satisfied, which means that the rate region is always convex.
\vspace{-0.2cm}
\begin{propi}
	\label{propo1}
 In a two-user Massive MIMO system with ergodic i.i.d.~Rayleigh fading channels and single-antenna users, both the UL and the DL rate regions are convex.
\end{propi}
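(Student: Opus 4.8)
The plan is to prove convexity separately for the downlink and uplink, in each case specializing the general coefficients to their i.i.d.~Rayleigh values and then applying the appropriate convexity criterion---Lemma~\ref{lemma-convexity} for the DL and Lemma~\ref{lemma-ul} for the UL. The observation that makes the i.i.d.~case tractable is that the self- and inter-user interference coefficients coincide pairwise: in the DL we have $\mu_{1,1}=\mu_{1,2}$ and $\mu_{2,1}=\mu_{2,2}$, whereas in the UL we have $\mu_{1,1}=\mu_{2,1}$ and $\mu_{1,2}=\mu_{2,2}$. This degeneracy collapses several of the otherwise independent terms in the general expressions and is ultimately what forces a definite sign.

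For the downlink, I would first note that with $\mu_{1,1}=\mu_{1,2}$ the factor $(\mu_{1,1}-\mu_{1,2})\eta_1$ in \eqref{eq:rate1_general} vanishes, so the interference denominator of $R_{1,\textnormal{bd}}$ becomes the constant $1+\mu_{1,1}$; the denominator of $R_{2,\textnormal{bd}}$ simplifies symmetrically. Each boundary rate is therefore the logarithm of an affine function of $\eta_1$, whose first derivative is strictly signed and whose second derivative is strictly negative. Substituting these four derivatives into the parametric second-derivative formula \eqref{eq:chain-rule-second} of Fact~\ref{def:chain_rule}, the denominator $(R'_{2,\textnormal{bd}})^3$ is strictly negative while the numerator is easily checked to be positive, so the ratio is strictly negative. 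By Lemma~\ref{lemma-convexity}, $R_{1,\textnormal{bd}}$ is then strictly concave in $R_{2,\textnormal{bd}}$ and the DL region is convex.

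For the uplink, I would verify the three conditions of Lemma~\ref{lemma-ul} in turn. On each of the two boundary segments the rates are logarithms of ratios of affine functions of the free coefficient, so all four first- and second-order derivatives are explicit rational functions; inserting them into \eqref{eq:chain-rule-second} yields a single rational expression for each second derivative. The goal is to show that after cancellation this expression is a manifestly positive denominator times a numerator that factors into nonnegative terms for $\eta\in[0,1]$, giving conditions (1) and (2). Condition (3) is the slope-matching inequality at the junction $\eta_1=\eta_2=1$; evaluating the ratio $R'_{1,\textnormal{bd}}/R'_{2,\textnormal{bd}}$ on each side via \eqref{eq:chain-rule-one} and clearing denominators should reduce it to a clean inequality such as $\mu_{1,2}/(1+\mu_{1,1})\le(1+\mu_{1,2})/\mu_{1,1}$, which holds because cross-multiplying leaves $0\le 1+\mu_{1,1}+\mu_{1,2}$.

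I expect the main obstacle to be the uplink second-derivative computation: the raw expression from \eqref{eq:chain-rule-second} is a large rational function in $\eta_2$ and the four coefficients whose sign is not apparent until one performs the cancellations that expose the shared factor $(1+\mu_{1,1}+\mu_{1,2}\eta_2)$ in numerator and denominator. The delicate point is organizing the algebra so that the surviving numerator is visibly a sum of products of nonnegative quantities rather than an indefinite polynomial; this is precisely where the pairwise equality of the interference coefficients does the real work, and where a symbolic-algebra check is valuable. Once that sign is settled, the junction inequality of condition (3) is routine.
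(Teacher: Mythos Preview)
Your proposal is correct and follows essentially the same approach as the paper: exploit the pairwise equality of interference coefficients in the i.i.d.\ Rayleigh case to simplify the boundary rates, then apply Lemma~\ref{lemma-convexity} with Fact~\ref{def:chain_rule} for the DL and verify the three conditions of Lemma~\ref{lemma-ul} for the UL, reducing condition~(3) to the same inequality $\mu_{1,2}/(1+\mu_{1,1})\le(1+\mu_{1,2})/\mu_{1,1}$. The paper likewise acknowledges that the UL second-derivative algebra is the laborious step and relies on symbolic computation to confirm the sign.
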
 

\vspace{-0.3cm}
\section{Massive MIMO with LoS Propagation}
\label{LoS}
In Massive MIMO with LoS propagation, the channel is deterministic and can be estimated with a negligible overhead. We thus consider perfect CSI and no self-interference terms.
\vspace{-0.4cm}
\subsection{Convexity of the DL Rate Region}
We consider a BS equipped with a uniform linear array with $d_{\textnormal{H}}\leq1/2$ as antenna spacing (in wavelengths). Denote by $\theta_1$ and $\theta_2$ the angles of the two users as seen from the BS, 
the achievable DL rates with MR precoding are 
$R_{k}\!=\!\log_2\!\left(\!1+\!\frac{M\rho_{\textnormal{dl}}\beta_k\eta_k}{1+g(\theta_k, \theta_j)\rho_{\textnormal{dl}}\beta_k\eta_j}\right)$ for $k,j=\{1,2\}$, $j\neq k$ \cite{massivemimobook}, where 
\begin{equation}
 g(\theta_k, \theta_j)=\begin{cases} 
\frac{\sin^2\left[\pi d_{\textnormal{H}}M(\sin(\theta_k)-\sin(\theta_j))\right]}{M\sin^2\left[\pi d_{\textnormal{H}}(\sin(\theta_k)-\sin(\theta_j))\right]} &\text{if}~\sin(\theta_k)\neq\sin(\theta_j)\\
M & \text{if}~\sin(\theta_k)=\sin(\theta_j).
\end{cases}
\label{eq:gtheta}
\end{equation}
At the Pareto boundary, we have
\begin{equation}
R_{1,\textnormal{bd}}\!=\!\log_2\!\left(\!1+\!\frac{\alpha_1\eta_1}{1\!+\!\mu_{1,2}(1\!-\!\eta_{1})}\right),\,\, R_{2,\textnormal{bd}}\!=\!\log_2\!\left(\!1+\!\frac{\alpha_2(1-\eta_1)}{1\!+\!\mu_{2,1}\eta_{1}}\right)\!,\\
\end{equation}
where $\alpha_1=M\rho_{\textnormal{dl}}\beta_1$, $\alpha_2=M\rho_{\textnormal{dl}}\beta_2$, $\mu_{1,2}=g(\theta_1, \theta_2)\rho_{\textnormal{dl}}\beta_1$, $\mu_{2,1}=g(\theta_1, \theta_2)\rho_{\textnormal{dl}}\beta_2$.
From Lemma~\ref{lemma-convexity}, by calculating the second-order derivative, it is obvious that $\frac{d^2 R_{1,\textnormal{bd}} }{d R_{2,\textnormal{bd}}^2}$ is not always non-positive, thus the rate region is not always convex, but it depends on the specific values of $\beta_1$, $\beta_2$, $M$, and  $\rho_{\textnormal{dl}}$.

Consider the special case with $\beta_1=\beta_2$, where $\alpha_1=\alpha_2$ and $\mu_{1,2}=\mu_{2,1}$. The second derivative $\frac{d^2 R_{1,\textnormal{bd}} }{d R_{2,\textnormal{bd}}^2}$ can be simplified and it is non-positive for any $\eta_{1}\in[0,1]$ if and only if 
\begin{equation}
\alpha_1\geq \mu_{1,2}(2+\mu_{1,2}).
\end{equation}
For equal per-antenna SNR, $\mathsf{SNR}\!=\!\rho_{\textnormal{dl}}\beta_1\!=\!\rho_{\textnormal{dl}}\beta_2$,  setting $\alpha_1=M \mathsf{SNR}$ and $\mu_{1,2}=g(\theta_1, \theta_2) \mathsf{SNR}$ gives the convexity condition
\begin{equation}
g(\theta_1,\theta_2)\leq\frac{\sqrt{M\cdot \mathsf{SNR}+1}-1}{\mathsf{SNR}}.
\label{eq:mimi-los-dl}
\end{equation}

Defining the threshold $g^*=\frac{\sqrt{M\cdot \mathsf{SNR}+1}-1}{\mathsf{SNR}}$, for a given pair of $(\theta_1, \theta_2)$, if $g(\theta_1,\theta_2)$ falls above this threshold, the DL achievable rate region is non-convex. 
Thus, for a given SNR, there are certain ranges of angular separation between the two users that leads to a non-convex rate region. 
\begin{figure}[ht]
	\vspace{-0.3cm}
	\centering
	\includegraphics[trim={0cm 0cm  0.5cm 0.2cm},clip, width=0.95\columnwidth]{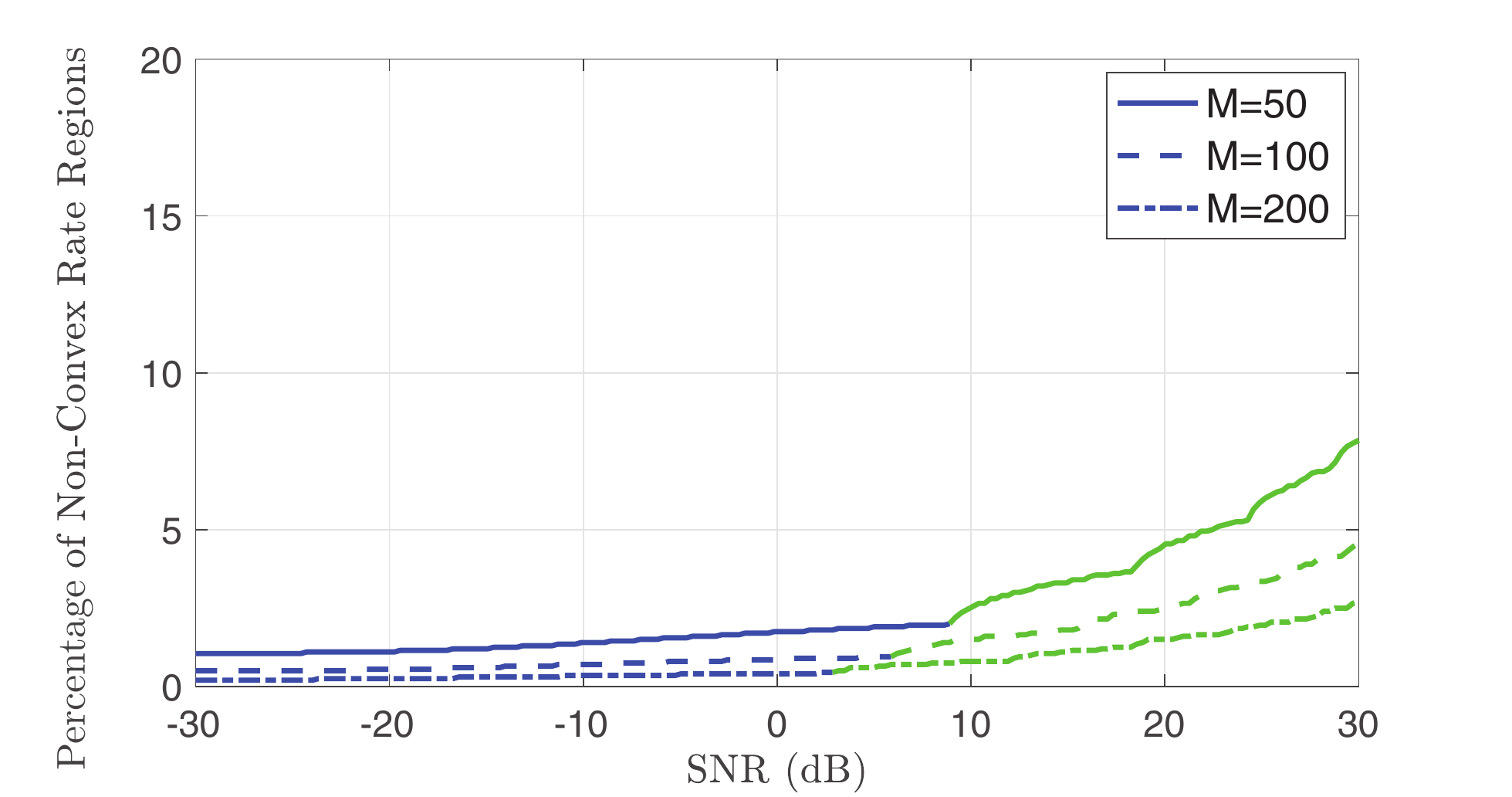}
		\vspace{-0.3cm}
	\caption{Percentage of angles with non-convex rate region when fixing $\theta_1=0$ and varying $\theta_2\in[0, 90\degree]$. $\mathsf{SNR}=\rho_{\textnormal{dl}}\beta_1=\rho_{\textnormal{dl}}\beta_2$. $d_{\textnormal{H}}=0.5$ wavelength.}
	\label{fig:LoS-g-vs-SNR}
\end{figure}

\begin{figure}[ht]
	\vspace{-0.1cm}
	\centering
	\includegraphics[trim={0cm 0cm  0.5cm 0.5cm},clip, width=0.95\columnwidth]{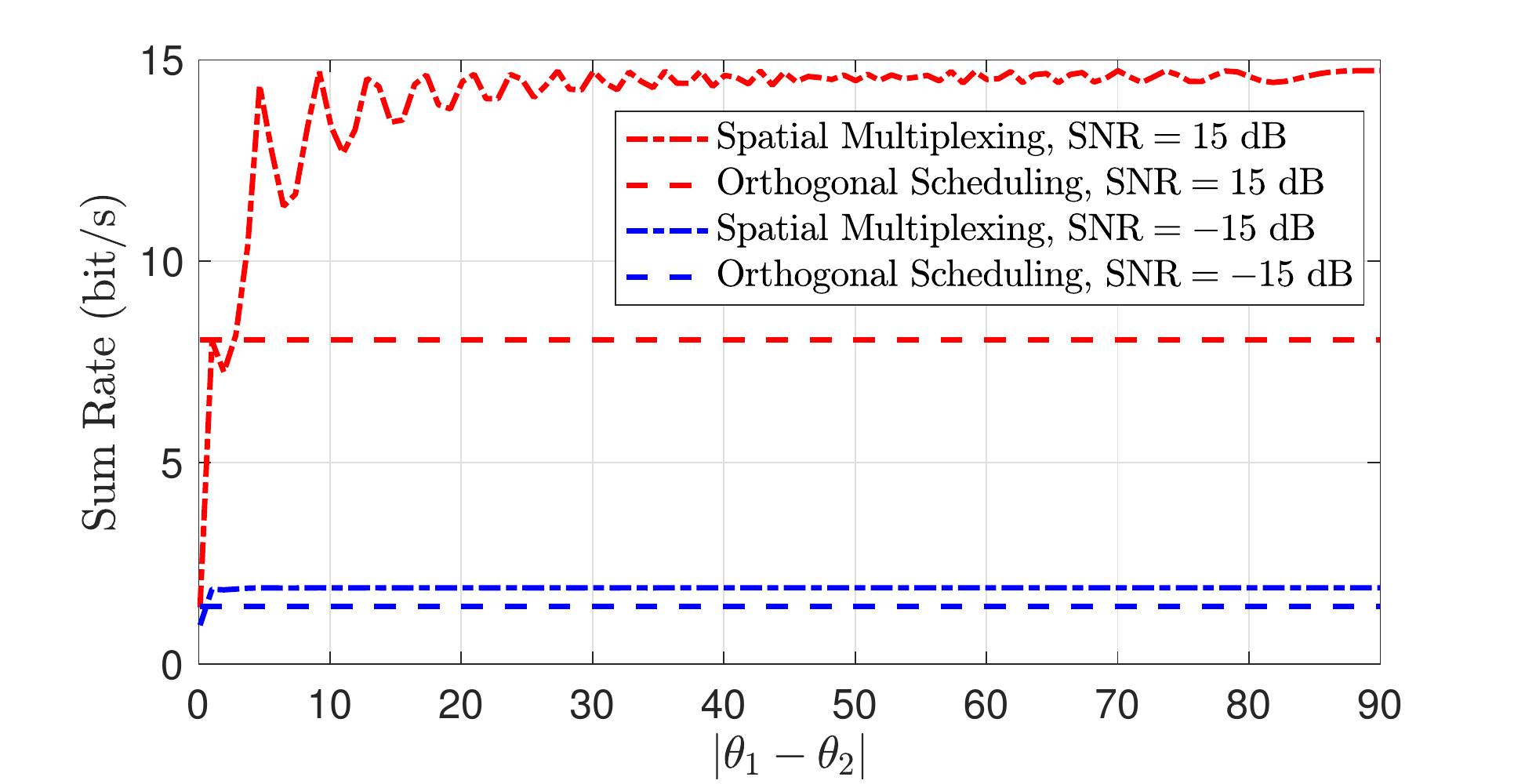}
	\vspace{-0.3cm}
	\caption{Sum rate comparison with $M=100$, same parameters as in Fig.~\ref{fig:LoS-g-vs-SNR}.}
	\label{fig:scheduling-gain}
	\vspace{-0.4cm}
\end{figure}

In Fig.~\ref{fig:LoS-g-vs-SNR}, we fix $\theta_1=0\degree$ and present the percentage of angles $\theta_2\in[0, 90\degree]$ resulting in non-convex rate regions for different SNR values. 
From \eqref{eq:gtheta}, we know that $g(\theta_1,\theta_2)$ does not decrease monotonically with $|\theta_1-\theta_2|$, but it fluctuates when $|\theta_1-\theta_2|$ increases. The green part of the curves means that there is more than one interval of angles that gives non-convex rate regions.
In the low SNR regime, the fraction of angles with non-convex rate region reduces as $1/M$, since the array aperture increases with $M$ (this happens for any $d_{\textnormal{H}}$). In particular, note that $g^*\rightarrow M/2$ when $\mathsf{SNR}\rightarrow 0$ and $g(\theta_1,\theta_2)$ decreases almost linearly with $M$ when $|\theta_1-\theta_2|$ is small.

It is shown in \cite[Ch.~7]{marzetta2016fundamentals} that Massive MIMO with LoS offers nearly favorable propagation, but the channel can become unfavorable if $|\sin(\theta_1)-\sin(\theta_2)|\leq 1/M$. Our observation is that when $|\sin(\theta_1)-\sin(\theta_2)|$ is very small, the rate region becomes non-convex. Both observations suggest that in LoS, we should schedule users with very similar angles orthogonally (e.g., using TDMA, FDMA, or OFDMA). For example, Fig.~\ref{fig:scheduling-gain} shows the sum rate obtained with spatial multiplexing and orthogonal equal-resource scheduling for different $|\theta_1-\theta_2|$. In the extreme case of $\theta_1\simeq\theta_2$, the scheduling gain is $480\%$ when $\text{SNR}=15$\,dB, while for many other angles, spatial multiplexing gives 80\% higher sum rate.
\vspace{-0.3cm}
\subsection{Convexity of the UL Rate Region}
The achievable UL rates with MR combining are given by
$R_{k}\!=\!\log_2\!\left(\!1+\!\frac{M\rho_{\textnormal{ul}}\beta_k\eta_k}{1+g(\theta_k, \theta_j)\rho_{\textnormal{ul}}\beta_j\eta_j}\right)$ for $k,j=\{1,2\}$ and $j\neq k$ \cite{massivemimobook}.
At the first boundary segment with $\eta_{1}=1$, we have
\vspace{-0.1cm}
\begin{equation}
R_{1,\textnormal{bd1}}\!=\!\log_2\!\left(\!1+\!\frac{\alpha_1}{1+\mu_{1,2}\eta_{2}}\right),\,\, R_{2,\textnormal{bd1}}\!=\!\log_2\!\left(\!1+\!\frac{\alpha_2\eta_2}{1+\mu_{2,1}}\right),\\
\end{equation}
where $\alpha_1=M\rho_{\textnormal{ul}}\beta_1$, $\alpha_2=M\rho_{\textnormal{ul}}\beta_2$, $\mu_{1,2}=g(\theta_1, \theta_2)\rho_{\textnormal{ul}}\beta_2$, $\mu_{2,1}=g(\theta_1, \theta_2)\rho_{\textnormal{ul}}\beta_1$.
At the second segment of the boundary with $\eta_{2}=1$, we have
\vspace{-0.1cm}
\begin{equation}
R_{1,\textnormal{bd2}}\!=\!\log_2\!\left(\!1+\!\frac{\alpha_1\eta_{1}}{1+\mu_{1,2}}\right),\,\,R_{2,\textnormal{bd2}}\!=\!\log_2\!\left(\!1+\!\frac{\alpha_2}{1+\mu_{2,1}\eta_{1}}\right).\\
\end{equation}
The convexity conditions in Lemma~\ref{lemma-ul} are not always satisfied, thus the UL rate region is not always convex.

In the special case with $\beta_1=\beta_2$, we have $\alpha_1=\alpha_2$ and $\mu_{1,2}=\mu_{2,1}$.
From Lemma~\ref{def:chain_rule} and Lemma~\ref{lemma-ul}, in order to have $\frac{d^2 R_{1,\textnormal{bd1}}}{d R_{2,\textnormal{bd1}}^2} \leq 0$ for any $\eta_{2}\in[0,1]$, we obtain 
\vspace{-0.2cm}
\begin{equation}
-\alpha_1^2+2\mu_{1,2}(1+\mu_{1,2})(1+\mu_{1,2}\eta_{2})+\alpha_1[-1+\mu_{1,2}+\mu_{1,2}^2(1+\eta_{2}^2)]\leq 0.
\label{eq:condition}
\end{equation}
The left-hand side of \eqref{eq:condition} is a quadratic function of $\eta_{2}$, which monotonically increases with $\eta_{2}$ within the range $0\leq \eta_{2}\leq 1$. Thus, its maximum value is reached at $\eta_{2}=1$. Plugging  $\eta_{2}=1$ into \eqref{eq:condition}, we have 
\vspace{-0.2cm}
\begin{eqnarray}
&2\mu_{1,2}(1+\mu_{1,2})^2+\alpha_1(2\mu_{1,2}^2+\mu_{1,2}-1)-\alpha_1^2\leq 0 \nonumber\\
\Rightarrow& [\alpha_1-2\mu_{1,2}(1+\mu_{1,2})][\alpha_1+(1+\mu_{1,2})]\geq 0\nonumber\\
\Rightarrow& \alpha_1\geq 2\mu_{1,2}(1+\mu_{1,2}).
\end{eqnarray}
Plugging in $\alpha_1=M\cdot \mathsf{SNR}$ and $\mu_{1,2}=g(\theta_1, \theta_2)\cdot \mathsf{SNR}$ with $\mathsf{SNR}=\rho_{\textnormal{ul}}\beta_1=\rho_{\textnormal{ul}}\beta_2$, we obtain 
\begin{equation}
\vspace{-0.2cm}
g(\theta_1, \theta_2)\leq \frac{\sqrt{2M\cdot \mathsf{SNR}+1}-1}{2 \mathsf{SNR}}.
\label{eq:first-two-condition}
\end{equation}
The third condition in Lemma~\ref{lemma-ul} gives $\frac{\mu_{1,2}}{1+\mu_{1,2}}\leq\frac{1+\mu_{1,2}}{\mu_{1,2}}$, which always holds.
Thus, we have that the UL rate region in Massive MIMO LoS is convex if and only if $g(\theta_1, \theta_2)\leq \frac{\sqrt{2M\cdot \mathsf{SNR}+1}-1}{2 \cdot \mathsf{SNR}}$. 

Similar to the DL case, in the low SNR regime, when fixing $\theta_1=0\degree$ and varying $\theta_2\in[0,90\degree]$, the percentage of angles that gives non-convex rate region scales proportionally with $1/M$.

\vspace{-0.15cm}
\begin{propi}
	\label{propo2} 
	In Massive MIMO with LoS channels, the convexity of the two-user rate region depends on the number of antennas $M$, SNRs, and user angles. In the low-SNR regime with $\beta_1=\beta_2$, the probability of having a non-convex region reduces as $1/M$ when the angles are uniformly distributed.
\end{propi}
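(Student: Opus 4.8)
The first assertion---that convexity depends on $M$, the SNRs, and the user angles---needs no separate argument: it is immediate from the explicit convexity conditions already obtained in this section (e.g.\ \eqref{eq:mimi-los-dl} for the DL and \eqref{eq:first-two-condition} for the UL), all of which are expressed through exactly these quantities. The content of the proposition is therefore the $1/M$ scaling of the non-convexity probability in the low-SNR regime, which I would establish by an asymptotic analysis of the non-convex angular set.

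First I would take the low-SNR limit of the two thresholds. Expanding $\sqrt{1+x}=1+\tfrac{x}{2}+O(x^2)$ gives, as $\mathsf{SNR}\to0$,
\begin{equation*}
g^{*}_{\textnormal{dl}}=\frac{\sqrt{M\,\mathsf{SNR}+1}-1}{\mathsf{SNR}}\to\frac{M}{2},\qquad g^{*}_{\textnormal{ul}}=\frac{\sqrt{2M\,\mathsf{SNR}+1}-1}{2\,\mathsf{SNR}}\to\frac{M}{2},
\end{equation*}
so both the UL and DL requirements collapse to the single condition $g(\theta_1,\theta_2)\le M/2$; asymptotically a region is non-convex precisely when $g(\theta_1,\theta_2)>M/2$.

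Next I would characterize the set $\{(\theta_1,\theta_2):g(\theta_1,\theta_2)>M/2\}$. Writing $\psi=\pi d_{\textnormal{H}}(\sin\theta_1-\sin\theta_2)$, we have $g=\sin^2(M\psi)/(M\sin^2\psi)$. Because $d_{\textnormal{H}}\le 1/2$ and $|\sin\theta_1-\sin\theta_2|\le1$ over $[0,90\degree]$, one has $|\psi|\le\pi/2$, so $\psi$ never reaches the grating-lobe locations $\pm\pi,\pm2\pi,\dots$ and only the main lobe at $\psi=0$ matters. There the small-argument approximation $\sin\psi\approx\psi$ gives $g\approx \sin^2(M\psi)/(M\psi^2)$, whence $g>M/2\iff |\sin(M\psi)/(M\psi)|>1/\sqrt2\iff |M\psi|<t^{*}$, where $t^{*}\approx1.39$ is the first positive root of $\sin(t)/t=1/\sqrt2$. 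The non-convex set is thus the thin interval $|\psi|<t^{*}/M$, i.e.\ $|\sin\theta_1-\sin\theta_2|<t^{*}/(\pi d_{\textnormal{H}}M)$, whose width shrinks as $1/M$.

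Finally I would convert this to a probability. Fixing $\theta_1=0\degree$ as in Fig.~\ref{fig:LoS-g-vs-SNR} and taking $\theta_2$ uniform on $[0,90\degree]$, the non-convex set reduces to $\sin\theta_2<t^{*}/(\pi d_{\textnormal{H}}M)$, i.e.\ $\theta_2\lesssim t^{*}/(\pi d_{\textnormal{H}}M)$ since $\sin\theta_2\approx\theta_2$ there; dividing by the fixed support length gives a probability $\Theta(1/M)$. The main obstacle, and the only place care is genuinely needed, is this last measure-conversion step together with the main-lobe reduction: one must confirm that contributions away from $\psi=0$ are truly absent for $d_{\textnormal{H}}\le1/2$ (so that a single interval results) and that the change of variables $\theta_2\mapsto\psi$ has Jacobian $\pi d_{\textnormal{H}}\cos\theta_2$ bounded away from $0$ across the shrinking interval---which holds here because it is located near $\theta_2=\theta_1=0$. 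If instead both angles are drawn uniformly, the same band argument applies, with at most a mild $\ln M$ endpoint correction from the corner $\theta_1=\theta_2\approx90\degree$ where $\cos\theta\to0$, the leading behaviour remaining $1/M$.
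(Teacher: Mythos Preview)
Your argument is correct and follows the same line the paper takes: the low-SNR limit $g^{*}\to M/2$ and the observation that the main lobe of $g(\theta_1,\theta_2)$ has angular width $\Theta(1/M)$. The paper, however, does not give a formal proof of Proposition~\ref{propo2}; it merely states the $1/M$ scaling as an observation supported by the discussion after \eqref{eq:mimi-los-dl} (``$g^*\rightarrow M/2$ when $\mathsf{SNR}\rightarrow 0$ and $g(\theta_1,\theta_2)$ decreases almost linearly with $M$ when $|\theta_1-\theta_2|$ is small'') and by Fig.~\ref{fig:LoS-g-vs-SNR}. Your treatment is therefore substantially more complete: you make explicit the sinc-type reduction $g>M/2\iff|M\psi|<t^{*}$, you verify that no sidelobes or grating lobes contribute when $d_{\textnormal{H}}\le1/2$, and you handle the change of variables $\theta_2\mapsto\psi$ with its Jacobian, all of which the paper leaves implicit. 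The only thing to flag is your closing remark about a possible $\ln M$ correction when both angles are uniform: in the paper's stated setting ($\theta_1=0\degree$ fixed, $\theta_2$ uniform on $[0,90\degree]$) this caveat is unnecessary, and the scaling is cleanly $\Theta(1/M)$.
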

\vspace{-0.15cm}
\section{Conclusions}
We studied the convexity of the achievable rate region in two-user Massive MIMO systems. We observed that with ergodic i.i.d.~Rayleigh fading, the rate region is always convex, thus it is always beneficial to serve the users by spatial multiplexing (Proposition~\ref{propo1}). With LoS channels, the convexity conditions depend on the number of antennas, the SNR values and the angles of the two users (Proposition~\ref{propo2}). When the angles are similar, the region is non-convex and  scheduling (time-sharing between the corner points) is preferable. The extension to more than two users is important future work.

\vspace{-0.1cm}

\bibliographystyle{IEEEtran}
\bibliography{ref}

\end{document}